\begin{document}

\newcommand\hreff[1]{\href {http://#1} {\small http://#1}}
\newcommand\trm[1]{{\bf\em #1}} \newcommand\emm[1]{{\ensuremath{#1}}}

\newtheorem{thm}{Theorem} 
\newtheorem{lem}{Lemma}
\newtheorem{cor}{Corollary}
\newtheorem{con}{Conjecture} 
\newtheorem{prp}{Proposition}

\newtheorem{blk}{Block}
\newtheorem{dff}{Definition}
\newtheorem{asm}{Assumption}
\newtheorem{rmk}{Remark}
\newtheorem{clm}{Claim}
\newtheorem{example}{Example}

\newcommand{\ab}{a\!b}
\newcommand{\yx}{y\!x}
\newcommand{\yux}{y\!\underline{x}}

\newcommand\floor[1]{{\lfloor#1\rfloor}}\newcommand\ceil[1]{{\lceil#1\rceil}}

\newcommand\lea{\prec}\newcommand\gea{\succ}\newcommand\eqa{\asymp}
\newcommand\lel{\lesssim}\newcommand\gel{\gtrsim}

\newcommand\edf{{\,\stackrel{\mbox{\tiny def}}=\,}}
\newcommand\edl{{\,\stackrel{\mbox{\tiny def}}\leq\,}}
\newcommand\then{\Rightarrow}

\newcommand\km{{\mathbf {km}}}\renewcommand\t{{\mathbf {t}}}
\newcommand\KM{{\mathbf {KM}}}
\newcommand\md{{\mathbf {m}_{\mathbf{d}}}}\newcommand\mT{{\mathbf {m}_{\mathbf{T}}}}
\newcommand\KL{{\mathbf {KL}}}
\newcommand\Kd{{\mathbf{Kd}}} \newcommand\KT{{\mathbf{KT}}} 
\renewcommand\d{{\mathbf d}} \newcommand\w{{\mathbf w}}
\newcommand\Ks{\Lambda} \newcommand\q{{\mathbf q}}
\newcommand\E{{\mathbf E}} \newcommand\St{{\mathbf S}}
\newcommand\Q{{\mathbf Q}}
\newcommand\ch{{\mathcal H}} \renewcommand\l{\tau}
\newcommand\tb{{\mathbf t}} \renewcommand\L{{\mathbf L}}
\newcommand\bb{{\mathbf {bb}}}\newcommand\Km{{\mathbf {Km}}}
\renewcommand\q{{\mathbf q}}\newcommand\J{{\mathbf J}}
\newcommand\z{\mathbf{z}}
\newcommand\MLR{\mathrm{MLR}}
\newcommand\B{\mathbf{bb}}\newcommand\f{\mathbf{f}}
\newcommand\hd{\mathbf{0'}} \newcommand\T{{\mathbf T}}
\renewcommand\Q{\mathbb{Q}}
\newcommand\BT{\Sigma}
\newcommand\FIS{\BT^{*\infty}}\newcommand\C{\mathcal{L}}
\renewcommand\S{\mathcal{C}}\newcommand\ST{\mathcal{S}}
\newcommand\UM{\nu_0}\newcommand\EN{\mathcal{W}}
\renewcommand\em{\it}
\newcommand\lenum{\lbrack\!\lbrack}
\newcommand\renum{f\rbrack\!\rbrack}

\renewcommand\qed{\hfill\emm\square}

\title{\vspace*{-3pc} Randomness Conservation over Algorithms}

\author {Samuel Epstein\\samepst@bu.edu\\Boston University}

\maketitle

\newcommand\D{{\mathbf{D}}}
\newcommand\I{{\mathbf I}}
\newcommand\FS{\BT^*}
\newcommand\IS{\BT^\infty}
\newcommand\M{{\mathbf M}}
\newcommand\K{{\mathbf K}} 
\newcommand\m{{\mathbf {m}}}
\newcommand\R{\mathbb{R}}
\newcommand\N{\mathbb{N}}

\begin{abstract}
Current discrete randomness and information conservation inequalities are over total recursive functions, i.e. restricted to deterministic processing. This restriction implies that an \textit{algorithm} can break algorithmic randomness conservation inequalities. We address this issue by proving tight bounds of randomness and information conservation with respect to recursively enumerable transformations, i.e. processing by algorithms. We also show conservation of randomness of finite strings with respect to enumerable distributions, i.e. semicomputable semi-measures. 
\end{abstract}

\section{Introduction}

A finite string $x$ is typical of a computable probability measure $p$ if the length of its shortest description with respect to a prefixless universal algorithm, $\K(x)\,{\in}\,\N$, is close to the length of its $p$ code, $\ceil{-\log p(x)}$. $\K(x)$ is the Kolmogorov complexity of $x$. Such $p$-typical elements $x$ have a low deficiency of randomness, $\d_p(x) = \ceil{-\log p(x)}-\K(x)$. Atypical elements $x$ have extra regularity that allows them to be compressed to length $\K(x)\ll\ceil{-\log p(x)}$. The algorithmic joint information of individual finite strings $x$ and $y$ is $\I(x\,{:}\,y)=\K(x)\,{+}\,\K(y)\,{-}\,\K(x,y)$, the difference between the length of the shortest separate descriptions of $x$, $y$ and the length of the shortest joint description of $x$, $y$. Strings are independent if they have low mutual information. 

It has been shown that this definition of information is robust, i.e. invariant to processing by total functions $A$ over finite strings~\cite{Levin74}. Such deterministic functions cannot create an increase, $\left(\I(A(x)\,{:}\,y)\,{-}\,\I(x\,{:}\,y)\right)$, in the mutual information of strings $x$ and $y$ by more than a constant factor, dependent on $A$. Randomness is also conserved over total recursive functions, where $\d_{Ap}(A(x))$ is not much more than $\d_{p}(x)$. The measure $(Ap)(y)=\sum_{A(z)=y}p(z)$ is the image of $p$ under $A$. 

However randomness is not conserved over the set of limit computable functions $B$ over finite strings, i.e, functions computed by possibly non-halting algorithms. There exists a simple limit computable function $B$, a probability measure $p$, and a string $x$ such that (1) $x$ is $p$-typical and (2) $B(x)$ not $(Bp)$-typical (see theorem~\ref{noncons}). Thus randomness conservation inequalities do not hold with respect to $B$ and $x$. This article shows only {\em\textbf{exotic}} strings $x$ break information and randomness conservation inequalities over limit computable functions. We use $\I(x\,{;}\,\mathcal{H})\,{=}\,\K(x)\,{-}\,\K(x|\mathcal{H})$, to denote the amount of information the halting sequence $\mathcal{H}$ has about $x$. A string $x$ is called {\em\textbf{exotic}} if $\I(x\,{;}\,\mathcal{H})$ is large. We prove randomness and information conservation inequalities over non-exotic strings and limit computable functions. 

\newpage

In addition, this article uses notions of rarity over enumerable distributions. Enumerable distributions are semi-measures, nonnegative functions $p$ over strings such that $\sum_{x\neq\perp}p(x)\leq 1$. The images of measures under partial functions are semi-measures. This article also shows randomness and information conservation of infinite binary strings. This article contains continuous results limited to specialized definitions of information and rarity that are amenable to natural extensions of the proofs in Sections~\ref{sec:rand} and~\ref{sec:info}. 

\section{Conventions}
Let $\R$, $\Q$, $\N$, $\BT$, $\FS$, $\IS$ be the set of reals, rationals, natural numbers, bits, finite strings, and infinite binary sequences. $\FIS\edf \FS\cup\IS$. $(x0)^-=(x1)^- x$ for $x\in\FS$. The empty string is $\perp$. $\|x\|$ is the length of string $x$. $\alpha_{\leq n}$ represents the first $n$ bits of $\alpha\in\FIS$, and $\alpha_{> n}$ represents the remaining bits of $\alpha$. We say $x\sqsubseteq y$ if $x\in\FS$ is a prefix of $y\in\FIS$. $\langle x\rangle\in\FS$ is a self-delimiting representation of $x\in\FS$. We use $O_{a_1,\dots,a_k}(X)$ to denote any quantity bounded in magnitude by $CX$ where $C\in\N$ is dependent on parameters $a_1,\dots,a_k$. Let $[Z]=1$ if statement $Z$ is true, and $[Z]=0$, otherwise. Let $U_y$ be a fixed universal prefixless algorithm with auxilliary input $y\in\FIS$ and $\K(x|y)=\min\{\|p\|\,{:}\,U_y(p)=x\}$. 

Function $p:\FS\rightarrow\R_{\geq 0}$ is a semi-measure iff $\sum_{x\neq\perp}p(x)\leq 1$. $p$ is also a measure iff $\sum_{x\neq\perp}p(x)=1$.  A nonnegative function $f$ is semicomputable if the subgraph $\{(x,q)\,{:}\,f(x)>q\in\Q\}$ is enumerable. $f$ is $\alpha$-semicomputable if $f$ is semicomputable over $U_\alpha$. For a fixed enumeration of semicomputable semi-measures, $\{p_n\}\ni p$, $\K(p|y)=\min_{p_n=p}\K(n|y)$. The function $\m(x|y)$ is a majorant semicomputable semi-measure relativized to $y\in\FIS$.

By the coding lemma $\K(x|y)=-\log\m(x|y)\pm O(1)$.  Function $f:\FS\rightarrow\FS$ is a partial recursive function if it can be computed by a Turing machine $T$. In particular, $f(x)=\perp$ if $T$ halts without output or $T$ does not halt. A function $B:\FS\rightarrow\FS$ is limit computable if there is a Turing machine $T$ such that if $B(x)\neq\perp$ then machine $T$, when given input $x$, will at some point, print $B(x)$ to the output tape and make no further changes to the output (and then either halt or never halt). Note that if $B(x)=\perp$ then such $T$ will either (1) output nothing ($\perp$) when given input $x$, (and may either halt or not halt) or (2) it will never halt and continuously change the output tape. For a fixed enumeration $\{B_n\}\ni B$ of limit computable functions over strings, $\K(B)=\min_{B_n=B}\K(n)$. The halting sequence, $\mathcal{H}$, is the characteristic sequence of the domain of $U$. Chaitin's halting probability is defined by $\Omega = \sum_{x\neq\perp}\m(x)$. The deficiency of randomness of $x\in\FS$ with respect to an arbitrary semi-measure $p$, relative to $y\in\FIS$, is $\d_p(x|y)=\ceil{-\log p(x)}-\K(x|y)$. For semi-measure $p$, we say nonnegative function $t$ is a $p$-test, iff $\sum_{x\neq\perp} p(x)t(x)\leq 1$.  

\section{Related Work}

This work is resultant from my trip to Montpellier with Alexander Shen and P\'{e}ter G\'{a}cs. Kolmogorov complexity was introduced independently in ~\cite{Kolmogorov65, Solomonoff64,Chaitin75}. For a detailed history of Algorithmic Information Theory, we refer to~\cite{LiVi08}.~\cite{Levin74} introduced laws of information non-growth over deterministic functions and later revisited in~\cite{Levin84}. The definition of $\I^\infty(\alpha;\mathcal{H})$ and theorem~\ref{consasyminfo} relies on modified arguments of Section 2 in~\cite{Levin84}. An extension of rarity to semi-measures can be found in the recent work of \cite{Levin12} and also can seen in the work of~\cite{Levin84}. \cite{Gacs13} contains an extended survey of randomness conservation inequalities and also describes properties of the rarity term $\D$ used in this article.
\newpage
\section{Randomness Conservation}
\label{sec:rand}

The central trick of the article is using the fact that $2^{\I(x;\mathcal{H})}$ is a majorant  $\mathcal{H}$-semicomputable, $\m$-test. This enables proof techniques centered around the creation of $\m$-tests $t$. For any computable measure $p$, the function $\t_p(x)=\m(x)/p(x)$ is a majorant (up to a multiplicative constant) semicomputable $p$-test. For more information about universal semicomputable tests, see~\cite{Gacs13}. Proposition~\ref{prp1} follows from $\I(x;\mathcal{H})= \log \m(x|\mathcal{H})/\m(x)\pm O(1)=\d_\m(x|\mathcal{H})\pm O(1)$, and from the fact that $\m$ is computable from $\mathcal{H}$. 

\begin{prp} 
\label{prp1}
$\I(x;\mathcal{H}) = \d_\m(x|\mathcal{H})\pm O(1)$.
\end{prp}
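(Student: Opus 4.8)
The plan is to unfold both sides of the claimed equality through the coding lemma and verify that the relativized quantities match up to an additive constant. Recall $\I(x\,{;}\,\mathcal{H}) = \K(x) - \K(x|\mathcal{H})$ by definition. Applying the coding lemma in the two relevant forms, $\K(x) = -\log\m(x) \pm O(1)$ and $\K(x|\mathcal{H}) = -\log\m(x|\mathcal{H}) \pm O(1)$, we obtain $\I(x\,{;}\,\mathcal{H}) = \log\big(\m(x|\mathcal{H})/\m(x)\big) \pm O(1)$. On the other hand, $\d_\m(x|\mathcal{H}) = \ceil{-\log\m(x)} - \K(x|\mathcal{H})$ by the definition of deficiency of randomness with semi-measure $p = \m$ relative to $\mathcal{H}$; the ceiling contributes at most $1$, so $\d_\m(x|\mathcal{H}) = -\log\m(x) - \K(x|\mathcal{H}) \pm O(1) = -\log\m(x) + \log\m(x|\mathcal{H}) \pm O(1)$, using the coding lemma once more on $\K(x|\mathcal{H})$. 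Comparing the two displays gives $\I(x\,{;}\,\mathcal{H}) = \d_\m(x|\mathcal{H}) \pm O(1)$.

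The one point that needs justification rather than routine bookkeeping is that $\m(x)$ may legitimately be used as the reference semi-measure $p$ inside $\d_p(x|\mathcal{H})$: this is fine because $\m$ is a semi-measure (indeed the coding-lemma development makes $-\log\m$ within $O(1)$ of $\K$, so it sums to at most $1$ up to a constant, and one can absorb the constant), and the definition of $\d_p$ in the Conventions section is stated for an arbitrary semi-measure $p$. The other subtlety — and the reason the proposition is phrased with a relativization on the right — is that the coding lemma for $\K(x|\mathcal{H})$ must be the $\mathcal{H}$-relativized version, i.e. $\K(x|\mathcal{H}) = -\log\m(x|\mathcal{H}) \pm O(1)$, where $\m(x|\mathcal{H})$ is the majorant semicomputable semi-measure relative to $\mathcal{H}$; this is exactly the relativized coding lemma, and the hint in the text that "$\m$ is computable from $\mathcal{H}$" is what guarantees the $-\log\m(x)$ term can be handled uniformly inside an $\mathcal{H}$-computation.

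I expect no genuine obstacle here — the statement is essentially a definitional identity once the coding lemma (plain and relativized) is in hand. The only care required is tracking that all the $O(1)$ error terms are genuinely additive constants independent of $x$ (they depend only on the universal machine $U$ and on fixed enumerations), and that the ceiling in the definition of $\d_\m$ is harmlessly absorbed. So the "hard part," such as it is, is simply being careful that the right-hand relativization is the one making the coding lemma applicable to $\K(x|\mathcal{H})$, which the preamble has already flagged.
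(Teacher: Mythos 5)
Your proof is correct and takes essentially the same route as the paper: the paper's one-line justification is exactly the chain $\I(x;\mathcal{H})=\log\left(\m(x|\mathcal{H})/\m(x)\right)\pm O(1)=\d_\m(x|\mathcal{H})\pm O(1)$ obtained from the plain and relativized coding lemma, which is what you have spelled out (including absorbing the ceiling).
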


Theorem~\ref{disccons} extends finite randomness conservation inequalites to limit computable functions $B$ and discrete semicomputable semi-measures $p$. For convenience we define $\d_p(\perp)=0$. Randomness is conserved for all strings that are non-exotic, i.e. have low mutual information with the halting sequence. The proof follows from the definition of an $\m$-test $t$ such that $\log t(x) = \d_{Bp}(B(x))- \d_p(x)\pm O_{B,p}(1)$. Theorem~\ref{noncons} shows the tightness of theorem~\ref{disccons}, and represents a generalization of the example used in the introduction. The proof of theorem~\ref{noncons}, leverages arguments in the proof of theorem~\ref{conttight}, adapted to the case of finite strings~\cite{BienvenuPoShHo13}.

\begin{thm}
\label{disccons}
For limit computable function $B:\FS\rightarrow\FIS$ and semicomputable semi-measure $p$, for all $x\in\FS$, $\d_{Bp}(B(x))< \d_p(x)+\I(x;\mathcal{H})+O_{B,p}(1)$.
\end{thm}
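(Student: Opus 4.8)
The strategy is to construct a single semicomputable $\m$-test $t$ on $\FS$ whose logarithm tracks the quantity $\d_{Bp}(B(x))-\d_p(x)$, and then feed it into the majorant property of $2^{\I(x;\mathcal{H})}$. Concretely, I would aim for a nonnegative semicomputable function $t$ satisfying $\sum_{x\neq\perp}\m(x)t(x)\le 1$ and
\[
\log t(x) \;=\; \d_{Bp}(B(x)) - \d_p(x) \;-\; O_{B,p}(1).
\]
Given such a $t$, the key fact quoted in the paper — that $2^{\I(x;\mathcal{H})}$ is a majorant $\mathcal{H}$-semicomputable $\m$-test, equivalently (Proposition~\ref{prp1}) that $\I(x;\mathcal{H}) = \d_\m(x|\mathcal{H})\pm O(1)$ — gives $t(x) \le O_t(1)\,2^{\I(x;\mathcal{H})}$ for every $x$ (up to a constant depending on the index of $t$, which is itself $O_{B,p}(1)$ since $t$ is built uniformly from $B$ and $p$). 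Taking logarithms and rearranging yields exactly $\d_{Bp}(B(x)) < \d_p(x) + \I(x;\mathcal{H}) + O_{B,p}(1)$, handling the $B(x)=\perp$ case by the convention $\d_p(\perp)=0$.

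To build $t$, recall $\d_p(x) = \ceil{-\log p(x)} - \K(x) = \log\bigl(\m(x)/p(x)\bigr)\pm O(1)$ and similarly $\d_{Bp}(B(x)) = \log\bigl(\m(B(x))/(Bp)(B(x))\bigr)\pm O(1)$, where $(Bp)(y)=\sum_{B(z)=y}p(z)$. So the target is, up to constants,
\[
t(x) \;\asymp\; \frac{\m(B(x))}{(Bp)(B(x))}\cdot\frac{p(x)}{\m(x)}.
\]
I would therefore define $t(x) = \dfrac{p(x)}{\m(x)}\cdot\dfrac{\m(B(x))}{(Bp)(B(x))}$ (with the obvious convention when $B(x)=\perp$, using the majorant $\m$-test structure directly so that the $\perp$ contribution vanishes). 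Semicomputability: $p$ is semicomputable by hypothesis, $\m$ from below; $\m(x)$ appears in a denominator, but $p(x)/\m(x)$ is dominated by a constant up to the standard $\m$-test normalization, and more carefully one works with the test $\t_p(x)=\m(x)/p(x)$ from the text and its inverse — the cleanest route is to verify the $\m$-test inequality for $t$ directly rather than worry about lower-semicomputability of each factor in isolation. The $\m$-test inequality is the crux:
\[
\sum_{x\neq\perp}\m(x)\,t(x)
\;=\;\sum_{x\neq\perp} p(x)\,\frac{\m(B(x))}{(Bp)(B(x))}
\;=\;\sum_{y}\;\frac{\m(y)}{(Bp)(y)}\sum_{B(x)=y}p(x)
\;=\;\sum_{y}\m(y)\;\le\;1,
\]
using the definition of $(Bp)$ to collapse the inner sum and then $\sum_y\m(y)\le 1$.

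The main obstacle is not the telescoping identity above — that is essentially forced — but ensuring that $t$ is genuinely \emph{semicomputable} (lower semicomputable) despite $B$ being only limit computable and $(Bp)$ sitting in a denominator. Since $B(x)$ is only known in the limit and $(Bp)(y)$ is only approximable from below, the ratio $\m(B(x))/(Bp)(B(x))$ is not obviously lower semicomputable pointwise. I expect the fix is the standard one: replace $\m(B(x))/(Bp)(B(x))$ by a lower-semicomputable under-approximation that still telescopes — e.g. run the enumeration of $B$ and simultaneously enumerate $p$ from below and $\m$ from below, and at each stage only credit mass to the current believed value of $B(x)$, which is consistent because when $B$'s output later stabilizes the earlier over-credited stages contributed a convergent total; alternatively invoke that $(Bp)$ is itself a semicomputable semi-measure (images of semicomputable semi-measures under limit-computable maps are semicomputable) and that $\m$ dominates it up to a multiplicative constant, so $\m(y)/(Bp)(y)$ is bounded and the product with the semicomputable $p(x)$ can be replaced by a semicomputable minorant with the same sum bound. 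Pinning down this approximation rigorously, with all constants absorbed into $O_{B,p}(1)$, is where the real work lies; everything downstream is a one-line application of the majorant property of $2^{\I(x;\mathcal{H})}$.
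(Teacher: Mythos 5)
Your test function, the telescoping computation of $\sum_x\m(x)t(x)$, and the final appeal to the majorant property of $2^{\I(x;\mathcal{H})}$ all coincide with the paper's proof. The gap is in your last paragraph, where you identify ``the real work'' as making $t$ lower semicomputable in the \emph{unrelativized} sense and sketch approximation schemes to achieve that. This requirement is both unnecessary and unachievable. Unnecessary: the majorant property you invoke is that $2^{\I(x;\mathcal{H})}$ multiplicatively dominates every \emph{$\mathcal{H}$-semicomputable} $\m$-test (equivalently, Proposition~\ref{prp1} says $\I(x;\mathcal{H})=\d_\m(x|\mathcal{H})\pm O(1)$, a deficiency \emph{relative to} $\mathcal{H}$). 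So all you need is that $t$ is $\mathcal{H}$-semicomputable with $\K(t|\mathcal{H})=O_{B,p}(1)$, and this is immediate: a limit computable $B$ is computable with oracle $\mathcal{H}$, the semicomputable $p$ and $\m$ are $\mathcal{H}$-computable, hence so are $Bp$ and the ratio $\m(B(x))p(x)/(\m(x)Bp(B(x)))$. That one observation replaces your entire final paragraph and is exactly what the paper does (``since $t$ is computable in the limit, it is $\mathcal{H}$-computable'').

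Unachievable: if you could produce an unrelativized semicomputable minorant $t'$ with $\sum_{x}\m(x)t'(x)\le 1$ and $\log t'(x)=\d_{Bp}(B(x))-\d_p(x)-O_{B,p}(1)$, then $\m(x)t'(x)$ would be a semicomputable semi-measure, hence at most $O_{B,p}(1)\m(x)$ by universality of $\m$; this gives $t'(x)=O_{B,p}(1)$ and therefore $\d_{Bp}(B(x))<\d_p(x)+O_{B,p}(1)$ with no $\I(x;\mathcal{H})$ term at all, contradicting Theorem~\ref{noncons}. So the scheme of ``crediting mass to the current believed value of $B(x)$'' cannot be made to work as stated: the information term in the bound is precisely the price of relativizing the test to $\mathcal{H}$, not an artifact you can approximate away.
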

\begin{proof} We use the $\mathcal{H}$-semicomputable $\m$-test $t$, where $t(x)= \m(B(x))p(x)/(\m(x)Bp(B(x))$. Since $t$ is computable in the limit, it is $\mathcal{H}$-computable, with $\K(t|\mathcal{H})\,{=}\,O_{B,p}(1)$. Also $t$ is an $\m$-test with $\sum_{x}\m(x)t(x) = \sum_{x}\m(B(x))p(x)/Bp(B(x))$ $= \sum_{y\in\mathrm{Range}(B)}\m(y)\sum_{x:B(x)=y}p(x)/Bp(y)\leq \sum_{y}\m(y)\leq 1$. So $\d_{Bp}(B(x))- \d_p(x)<\log t(x) +O(1)< \d_\m(x|\mathcal{H})+\K(t|\mathcal{H}) <\I(x;\mathcal{H})+O_{B,p}(1)$. 
\end{proof}
\begin{thm} 
\label{noncons}
For all $b,c\,{\in}\,\N$, there exists limit computable function $B$, measure $p$, and string $x$ such that $\K(B,p)\,{=}\,O(\log bc)$, $\d_{Bp}(B(x))\,{=}\,b\,{+}\,c\,{\pm}O(\log bc)$, $\d_p(x)=b\,{\pm}O(\log bc)$, and $\I(x\,{;}\,\mathcal{H})\,{=}\,c\,{\pm}O(\log bc)$.
\end{thm}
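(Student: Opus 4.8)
The plan is to construct $B$, $p$, and $x$ by a "pull-the-probability-mass" trick against the halting problem, mirroring the continuous construction of Theorem~\ref{conttight} (itself after~\cite{BienvenuPoShHo13}) but with finite strings. Fix $b,c\in\N$. I would take $x$ to be a string of length roughly $b+c$ that is itself random but whose first $c$ or so bits encode a prefix of $\mathcal{H}$ (equivalently, a sufficiently precise approximation to $\Omega$), so that $\I(x\,{;}\,\mathcal{H})\approx c$; the remaining $\approx b$ bits are genuinely algorithmically random relative to that halting information. For $p$ I would take a computable measure under which $x$ has $p$-code length $\approx b+c$ but deficiency only $\d_p(x)\approx b$ — e.g. $p$ spreads mass uniformly over the $2^{b+c}$ strings of length $b+c$, so $\ceil{-\log p(x)}=b+c$ and $\d_p(x)=b+c-\K(x|\varepsilon)\approx b+c-c=b$, using that $\K(x)\approx c + (\text{length of the random part}) \approx c + b$ is wrong — more carefully, I want $\K(x)\approx b+c$ minus the savings from the halting information, so that the $\varepsilon$-relativized complexity loses $\approx c$ bits that reappear when $\mathcal{H}$ is given; this is exactly the regime $\I(x\,{;}\,\mathcal{H})\approx c$, $\d_p(x)\approx b$.

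Next I would design the limit-computable $B$ so that $Bp$ concentrates: $B$ watches the enumeration of $\Omega$ from below, and on input $z$ it outputs a short canonical string (say the length-$b$ prefix of the random part of $z$, or simply $B(z)=z_{\le b}$ stripped of the halting code) while the convergence of the $\Omega$-approximation tells $B$ which inputs to "merge." Because $B$ only settles in the limit, $Bp(B(x))$ can be made as large as $\approx 2^{-b}$ — a factor $2^{c}$ larger than $p(x)\approx 2^{-(b+c)}$ — since $B$ funnels the mass of about $2^{c}$ strings (those agreeing with $x$ outside the halting-code block) onto the single image $B(x)$. Meanwhile $B(x)$ is still essentially as incompressible as its length $\approx b+c$ allows given no halting information, so $\K(B(x))\approx b+c$ and $\d_{Bp}(B(x))=\ceil{-\log Bp(B(x))}-\K(B(x))\approx -(-b)$ versus — wait, we need this to come out to $b+c$, which forces $Bp(B(x))\approx 2^{-(b+c)}$ while $\K(B(x))\approx 0$ relative to... the cleanest route is to make $B(x)$ short, length $O(\log bc)$, carrying negligible complexity, while $Bp$ still assigns it only $\approx 2^{-(b+c)}$: then $\d_{Bp}(B(x))\approx b+c$ automatically. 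So I would instead have $B$ output a fixed short string $\sigma$ on input $x$ but on most other inputs output something else, arranging $Bp(\sigma)\approx 2^{-(b+c)}$.

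The verification splits into four estimates, each up to $O(\log bc)$: (i) $\K(B,p)=O(\log bc)$, immediate since both are specified by the parameters $b,c$ and a fixed program; (ii) $\d_p(x)=b\pm O(\log bc)$, from the choice of $p$ and from $\K(x)\approx \K(\text{halting code})+\K(\text{random block})\approx c+b$ — note the halting code of length $\approx c$ is itself incompressible without $\mathcal{H}$, which also gives (iv); (iii) $\d_{Bp}(B(x))=b+c\pm O(\log bc)$, the heart of the matter, requiring both the lower bound $Bp(B(x))\lesssim 2^{-(b+c)}$ via a direct count of inputs $B$ maps to $B(x)$ and a matching upper bound on $\K(B(x))$; (iv) $\I(x\,{;}\,\mathcal{H})=c\pm O(\log bc)$, i.e. $\K(x\mid\mathcal{H})\approx b$, since given $\mathcal{H}$ the $\approx c$-bit halting-code block becomes computable (hence free) while the random block still costs $\approx b$. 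The main obstacle is engineering $B$ so that the merging is simultaneously (a) realizable by an eventually-stabilizing algorithm — it must commit to an output and, crucially for non-exotic inputs, never revise it — and (b) tight enough that $Bp(B(x))$ hits $2^{-(b+c)}$ within a $\mathrm{poly}(bc)$ factor rather than overshooting; this is precisely where the $\Omega$-approximation schedule from the proof of Theorem~\ref{conttight} must be imported and checked to survive the passage from $\IS$ to $\FS$.
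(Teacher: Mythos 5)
Your overall blueprint --- code $c$ bits of halting information (a prefix of $\Omega$) into $x$, take $p$ uniform on strings of length $b+c$, and have a limit computable $B$ collapse $x$ to a short, trivially compressible string that $Bp$ still weights by only $2^{-(b+c)}$ --- is the paper's construction, and your final reformulation of goal (iii) ($\K(B(x))\approx 0$, $Bp(B(x))\approx 2^{-(b+c)}$, unique preimage) is exactly right. But two of the steps you leave open are genuine gaps, not bookkeeping. First, your choice of $x$ is wrong: if the $b$ bits outside the halting-code block are ``genuinely algorithmically random relative to that halting information,'' then $\K(x)\approx b+c$ and $\d_p(x)=\ceil{-\log p(x)}-\K(x)\approx 0$, not $b$. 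Your own verification item (ii) derives $\d_p(x)=b$ from $\K(x)\approx c+b$, which is arithmetically $(b+c)-(c+b)=0$; the inconsistency is never resolved. What the four target equalities jointly force is $\K(x)\approx c$ together with $\K(x\mid\mathcal{H})\approx 0$, i.e.\ the non-halting block must be \emph{trivial}, not random. The paper takes $x=0^b\Omega_c$, so that $\K(x)=\K(\Omega_c)\pm O(\log bc)=c\pm O(\log bc)$, which gives $\d_p(x)=b\pm O(\log bc)$ and $\I(x\,{;}\,\mathcal{H})=c\pm O(\log bc)$ simultaneously.

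Second, the mechanism by which a limit-computable $B$ isolates $x$ as the \emph{unique} preimage of the short output --- which you yourself flag as ``the heart of the matter'' and ``the main obstacle'' --- is never supplied, and it is the only nontrivial idea in the proof. The paper's device is an order comparison against the left-enumerable real: on inputs $z$ of length $n=b+c$, set $B(z)=z$ if $z_{>b}\lhd\Omega_c$, $B(z)=z_{\leq b}$ if $z_{>b}=\Omega_c$, and $B(z)=\perp$ if $\Omega_c\lhd z_{>b}$, comparing the length-$c$ tails as numbers. Because $\Omega_c$ is approximated monotonically from below, each input's output changes at most once before stabilizing, so $B$ is limit computable with $\K(B,p)=O(\log bc)$. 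The only input mapped to a length-$b$ string is the one whose tail equals $\Omega_c$, so the sole preimage of $0^b$ is $0^b\Omega_c$ itself, whence $Bp(0^b)=2^{-(b+c)}$ and $\d_{Bp}(0^b)=b+c-\K(0^b)=b+c-O(\log bc)$. Note also that no ``funneling of $2^c$ strings'' occurs at the designated image point --- your first attempt, which yields $Bp(B(x))\approx 2^{-b}$ and which you rightly abandon, would have destroyed the bound; the construction works precisely because the designated image has a single preimage while the limit behaviour of $B$ is hidden in deciding \emph{which} input that is.
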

\begin{proof}
Let $n=b+c$ and $\Omega_{\leq c}$ be a string representing the first $c$ bits of $\Omega$. So $\Omega_c$ is a random string and can be identified with a $c$-bit number that is enumerable from below.  Let $\lhd$ be a partial order over finite strings  where if $x,y\in\BT^n$, then $x \lhd y$ iff the $n$ bit number associated with $x$ is smaller than the $n$-bit number associated with $y$.

Let $B(x)=\perp$ if $\|x\|\neq n$. Otherwise $B(x)=x$ if $x_{>b} \lhd\Omega_c$, $B(x)=x_{\leq b}$ if $x_{>b}=\Omega_c$, or $B(x)=\perp$ if $\Omega_c\lhd x_{>b}$. $B$ can be enumerated by a non-halting Turing machine. Let $x=0^b\Omega_c$. $x$ has deficiency $b\pm O(\log bc)$ with respect to the uniform measure over $n$-bit strings $p(x)\,{=}\,[\|x\|\,{=}\,n]2^{-n}$. This is because $\d_p(x) = \ceil{-\log p(x)}-\K(x)=b+c-\K(\Omega_c)\pm O(\log (bc))=b\pm O(\log (bc))$. $B(x)$ has a greater randomness deficiency with respect to the probability measure $Bp$. This is because $\d_{Bp}(B(x)) = \d_{Bp}(0^b) = \ceil{-\log Bp(0^b)}-\K(0^b) = b+c-O(\log (bc))$. In addition, $\Omega_c$ is simple relative to $\mathcal{H}$ and $c$, since $\Omega$ can be computed to any degree of precision by an algorithm with access $\mathcal{H}$. Thus $\K(\Omega_c|\mathcal{H})=O(\log c)$ implies $\I(x;\mathcal{H})=\K(0^b\Omega_c)-\K(0^b\Omega_c|\mathcal{H})= c\pm O(\log (bc))$.
\end{proof}
\newpage
\section{Information Conservation}
\label{sec:info}
We prove information nongrowth over limit computable functions. Theorem~\ref{consinfo1} shows conservation of the symmetric information $\I(x\,{:}\,y)$ and theorem~\ref{consinfo2} shows conservation of asymmetric information $\I(x;\mathcal{H})$ between $x$ and the halting sequence.

\begin{thm}
\label{consinfo1}$ $\\
For $x,y\in\FS$ and limit computable function $B$,   $\I(B(x)\,{:}\,y) <\I(x\,{:}\,y)+\I(\langle x,y\rangle\,{;}\,\mathcal{H})+O_B(1)$.
\end{thm}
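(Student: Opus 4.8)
The plan is to mimic the strategy of Theorem~\ref{disccons}: build a single $\mathcal H$-semicomputable $\m$-test whose logarithm controls the information difference, then bound that test value using Proposition~\ref{prp1}. Recall the standard deterministic information non-growth proof: one shows $\I(A(x)\,{:}\,y)-\I(x\,{:}\,y)\le \log\big(\m(A(x),y)/\m(x,y)\big)+O_A(1)$ by observing that $\m(A(x),y)/\m(x,y)$, summed against $\m(x,y)$, telescopes over fibres of $A$ just as in the randomness-conservation computation, giving a joint-coordinate $\m$-test. The obstruction for limit computable $B$ is exactly the same as for randomness: the function $x\mapsto \m(B(x),y)$ is not semicomputable, only $\mathcal H$-semicomputable, so the ``test'' we obtain lives over $U_{\mathcal H}$ rather than over $U$.

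Concretely, first I would recall the exact-complexity reformulation $\I(u\,{:}\,v)=\log\big(\m(u,v)/(\m(u)\m(v))\big)\pm O(1)$, so that
\[
\I(B(x)\,{:}\,y)-\I(x\,{:}\,y)=\log\frac{\m(B(x),y)\,\m(x)}{\m(x,y)\,\m(B(x))}\pm O(1).
\]
Next I would define the candidate test in the pair variable $(x,y)$, namely
\[
t(x,y)=\frac{\m(B(x),y)\,\m(x)}{\m(x,y)\,\m(B(x))}\cdot\frac{1}{\text{(normalising factor)}},
\]
and check that $\sum_{x,y}\m(x,y)\,t(x,y)\le O_B(1)$: grouping $x$ by the value $z=B(x)$ and using $\sum_{x:B(x)=z}\m(x)\le O(\m(z))$ (or more simply $\le 1$ after a further sum) collapses the offending factors, leaving $\sum_{z,y}\m(z,y)\le 1$. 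Thus $t$ is, up to the $O_B(1)$ constant, an $\m$-test in the variable $\langle x,y\rangle$. Since $B$ is limit computable and $\m$ is $\mathcal H$-computable, $t$ is $\mathcal H$-semicomputable with $\K(t|\mathcal H)=O_B(1)$, so $\log t(x,y)\le \d_{\m}(\langle x,y\rangle\mid\mathcal H)+O_B(1)$, which by Proposition~\ref{prp1} is $\I(\langle x,y\rangle;\mathcal H)+O_B(1)$. Chaining the three displays yields the claimed bound.

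The main obstacle I anticipate is purely bookkeeping rather than conceptual: making the $\m$-test argument honest when it lives in the \emph{pair} variable. One must be careful that $\log t(x,y)$ is compared against the universal $\m$-test on $\langle x,y\rangle$ (not on $x$ alone), that Proposition~\ref{prp1} is applied to the string $\langle x,y\rangle$, and that the fibre sum $\sum_{x:B(x)=z}\m(x)$ is handled correctly — one may need to absorb a factor by summing $y$ first or by using $\m(x)\le\m(x\mid B)\cdot 2^{O_B(1)}$, so that no spurious dependence on $\m(B(x))$ survives. A minor additional point is that $B(x)$ may be infinite, in which case $\I(B(x)\,{:}\,y)$ and $\m(B(x),y)$ should be read via the appropriate convention (and the inequality is trivial or follows by the same fibre argument restricted to finite outputs); I would note this in passing. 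None of these require new ideas, so the proof should be short once the test is set up.
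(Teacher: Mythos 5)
Your plan is essentially the paper's proof: the same test $t(x,y)=c\,\m(B(x),y)\m(x)/(\m(x,y)\m(B(x)))$, the same verification that it is an $\m$-test in the pair variable, the same observation that $t$ is $\mathcal{H}$-semicomputable because $B$ is limit computable, and the same final appeal to Proposition~\ref{prp1} applied to $\langle x,y\rangle$. One step needs care, though: the first bound you offer for the fibre sum, $\sum_{x:B(x)=z}\m(x)\le O(\m(z))$, is \emph{false} for limit computable $B$ --- the pullback of $\m$ under $B$ is only $\mathcal{H}$-semicomputable, so it is dominated by $O_B(1)\m(z|\mathcal{H})$ but not by $O_B(1)\m(z)$. (Take $B$ mapping every $x$ to $\Omega_{\le \|x\|}$: the fibre mass over $\Omega_{\le n}$ is at least $\m(0^n)\ge 2^{-O(\log n)}$, while $\m(\Omega_{\le n})\approx 2^{-n}$.) Had that bound held, you would have proved the theorem with no $\I(\langle x,y\rangle;\mathcal{H})$ term at all, which Theorem~\ref{noncons} rules out; this inequality is exactly what separates partial recursive $f$ (where Theorem~\ref{consinfo2} does use it legitimately) from limit computable $B$. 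Your alternative --- sum over $y$ first using $\sum_y\m(z,y)\le O(\m(z))$, then bound $\sum_z\sum_{x:B(x)=z}\m(x)\le\sum_x\m(x)\le 1$ --- is correct and, if anything, slightly cleaner than the paper's route, which instead bounds the fibre sum by $O_B(1)\m(z|\mathcal{H})$ and closes the sum via $\sum_z\m(z)2^{\I(z;\mathcal{H})}=\sum_z\m(z|\mathcal{H})\le 1$. So the proposal is sound provided you commit to the second branch and discard the first.
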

Let $t(x)= c2^{\I(B(x)\,{:}\,y)-\I(x\,{:}\,y)} =c\m(B(x),y)\m(x)/(\m(x,y)\m(B(x)))$ where $c$ is a constant solely dependent on $B$. $t$ is a $\m(x,y)$ test, with $\sum_{x,y}\m(x,y)t(x,y) = c\sum_{x,y}  \m(B(x),y)\m(x)/\m(B(x))=c\sum_{y,z}\sum_{x:B(x)=z} \m(z,y)\m(x)/\m(z)$. At this point, we can use the following inequality, where for all $z\in\FS$, $\m(z|\mathcal{H})>\sum_{x:B(x)=z}\m(B,x)/O(1)>\sum_{x:B(x)=z}\m(x)/O_B(1)$. So for proper choice of $c$, we have $\sum_{x,y}\m(x,y)t(x,y)=c\sum_{x,y}\m(B(x),y)\m(x)/\m(B(x))$
$=c\sum_{y,z}\sum_{x:B(x)=z} \m(z,y)\m(x)/\m(z)$ $< O_B(1)c\sum_{y,z}\m(z,y)\m(z|\mathcal{H})/\m(z){=}O_B(1)c\sum_{z}\sum_y\m(z,y)2^{\I(z;\mathcal{H})}$ $< \sum_z\m(z)2^{\I(z;\mathcal{H})}\leq O_B(1)c\leq 1$. Since $t$ is computable in the limit, $t$ is $\mathcal{H}$-semicomputable. So $\I(B(x)\,{:}\,y)-\I(x\,{:}\,y) <\log t(x,y)+O_B(1)< \d_{\m}(\langle x,y\rangle|\mathcal{H})+\K(\m,t|\mathcal{H}) +O_B(1) < \I(\langle x,y\rangle\,{;}\,\mathcal{H})+O_B(1)$. $\qed$

\begin{thm}
\label{consinfo2}
For partial recursive function $f$ and all $x\in\FS$, $\I(f(x);\mathcal{H}) < \I(x;\mathcal{H})+O_f(1)$.
\end{thm}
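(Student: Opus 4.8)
The plan is to reuse the strategy of Section~\ref{sec:rand}: because $f$ is partial recursive (and not merely limit computable) we can exhibit an $\mathcal{H}$-semicomputable $\m$-test $t$ on $\FS$ with $t(x)\eqa 2^{\I(f(x);\mathcal{H})}$ and $\K(t|\mathcal{H})=O_f(1)$, and then invoke the majorant property recalled at the start of Section~\ref{sec:rand} together with Proposition~\ref{prp1}: the test $2^{\I(x;\mathcal{H})}\eqa\m(x|\mathcal{H})/\m(x)$ dominates every $\mathcal{H}$-semicomputable $\m$-test $t$ up to the factor $2^{\K(t|\mathcal{H})+O(1)}$. This yields $2^{\I(f(x);\mathcal{H})}\lea 2^{\I(x;\mathcal{H})+O_f(1)}$, and taking logarithms is the theorem. (A one-line reduction from Theorem~\ref{disccons} with $p=\m$ is not available: that inequality bounds $\d_{f\m}(f(x))$, which need not agree with $\I(f(x);\mathcal{H})$ up to $O(1)$ — the two even involve $\K(f(x))$ with opposite signs — so a direct test construction is the natural route.)

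Concretely, set $t(x)=\m(f(x)|\mathcal{H})/\m(f(x))$ when $f(x)\neq\perp$ and $t(x)=0$ otherwise; by Proposition~\ref{prp1}, $t(x)\eqa 2^{\I(f(x);\mathcal{H})}$. The only inequality needed to see that a scalar multiple of $t$ is an $\m$-test is
\[
\sum_{x:\,f(x)=y}\m(x)\ \leq\ O_f(1)\,\m(y)\qquad(y\in\FS),
\]
which holds because $f$ partial recursive makes ``$f(x)=y$'' a $\Sigma^0_1$ condition, so $y\mapsto\sum_{x:f(x)=y}\m(x)$ is a semicomputable semi-measure (its total mass is at most $\sum_x\m(x)\leq1$) and is hence dominated by the universal $\m$ up to a constant depending only on an index for $f$. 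This is precisely the step that breaks for limit computable $B$, for which ``$B(x)=y$'' is only $\Delta^0_2$ — in accordance with the fact that the theorem fails for limit computable functions (the limit computable $B$ with $B(0^n)=\Omega_n$ sends the non-exotic $0^n$ to the exotic $\Omega_n$). Granting the displayed bound, $\sum_x\m(x)t(x)=\sum_y\frac{\m(y|\mathcal{H})}{\m(y)}\sum_{x:f(x)=y}\m(x)\leq O_f(1)\sum_y\m(y|\mathcal{H})\leq O_f(1)$, so $t$ rescaled by a suitable constant is an $\m$-test.

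What remains, and the one point I expect to be delicate, is that $t$ is genuinely $\mathcal{H}$-semicomputable with $\K(t|\mathcal{H})=O_f(1)$: the numerator $\m(\cdot|\mathcal{H})$ is lower semicomputable relative to $\mathcal{H}$, but the denominator $\m(f(x))$ is only lower semicomputable, so a priori $1/\m(f(x))$ is problematic. The resolution is that $\m$ is a monotone limit of a computable approximation from below, hence a uniformly left-c.e.\ real-valued function, hence $\mathcal{H}$-computable (with the halting oracle one can pin down $\m(f(x))$ to any prescribed precision); thus $1/\m(f(x))$ is $\mathcal{H}$-computable, and $t$, being a product of an $\mathcal{H}$-lower-semicomputable function and an $\mathcal{H}$-computable one, is $\mathcal{H}$-lower-semicomputable. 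As the whole recipe is uniform in an index for $f$, $\K(t|\mathcal{H})\leq\K(f)+O(1)=O_f(1)$. The majorant property then gives $2^{\I(f(x);\mathcal{H})}\lea t(x)\leq 2^{\I(x;\mathcal{H})+\K(t|\mathcal{H})+O(1)}=2^{\I(x;\mathcal{H})}O_f(1)$, and taking logarithms — using Proposition~\ref{prp1} at $f(x)$ and the identity $\I(x;\mathcal{H})=\log(\m(x|\mathcal{H})/\m(x))\pm O(1)$ at $x$ — yields $\I(f(x);\mathcal{H})<\I(x;\mathcal{H})+O_f(1)$.
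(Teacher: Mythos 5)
Your proposal is correct and follows essentially the same route as the paper: the paper defines $s(x)=c\,\m(f(x)|\mathcal{H})\m(x)/\m(f(x))$ and shows it is an $\mathcal{H}$-computable semi-measure via the same key bound $\sum_{x:f(x)=y}\m(x)\leq O_f(1)\m(y)$, then applies domination by $\m(\cdot|\mathcal{H})$; your $\m$-test $t(x)=\m(f(x)|\mathcal{H})/\m(f(x))$ is just $s(x)/(c\,\m(x))$, so the two arguments are the same up to this reparametrization. Your extra remarks on why $1/\m(f(x))$ is unproblematic relative to $\mathcal{H}$ and why the step fails for limit computable $B$ are accurate and merely make explicit what the paper leaves implicit.
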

\begin{proof} 

We define the function $s:\FS\rightarrow \FS$ where $s(x) = c\m(f(x)|\mathcal{H})\m(x)/\m(f(x))$ when $f(x)\neq\perp$, and $s(x)=0$ otherwise. $c$ is a constant to be determined later. The function $s$ is a semi-measure by the following reasoning. Since $\m$ is a majorant semi-computable semi-measure, $\m(y)>\sum_{x:f(x)=y}\m(x)/O_f(1)$. So $\sum_x s(x) = c\sum_{y}\sum_{x:f(x)=y}\m(y|\mathcal{H})\m(x)/\m(y)<O_f(1)c\sum_y\m(y|\mathcal{H})\leq 1$, for proper choice of $c$ solely dependent on $f$. Since $s$ is computable relative to $\mathcal{H}$, we have that $\log s(x) < \log \m(x|\mathcal{H}) + \K(s)+O(1)$. So $\I(f(x);\mathcal{H}) < \I(x;\mathcal{H})+O_f(1)$.
\end{proof}

\section{Continuous Conservation}
\label{seccont}
Some care is needed to extend the asymmetric information term $\I(x;\mathcal{H})$ to the case of infinite sequences. For $x\in\FS$,  $\Gamma_x\subseteq\IS$ represents the set of all infinite strings $\alpha\in\IS$ where  $\alpha \sqsupseteq x$. Thus $\IS$ is a Cantor space and the set of intervals, $\{\Gamma_x\,{:}\,x\in\FS\}$, is a binary topological basis for $\IS$. Continuous semi-measures $P$ are defined using functions $P:\FS\rightarrow\R_{\geq 0}$ such that $P(\perp)\leq 1$ and $P(x)\geq P(x0)+P(x1)$. We extend $P$ to $\IS$, with $P(\Gamma_x)=P(x)$ and for any open set $U\subseteq\IS$, $P(U) = \sum_{\Gamma_a\subseteq U}P(\Gamma_a)$, where $\Gamma_a$ are the maximal intervals of $U$.  For any set $D\subset\FS$ of finite strings, $P(D)=P(\cup\{\Gamma_x\,{:} \,{x}\,{\in}D\})$. 

Let $\{P_i\}$ be an enumeration of all semicomputable continuous semi-measures. We use the fixed majorant semicomputable continuous semi-measure, $\M(x) = \sum_i2^{-i}P_i(x)$. Semicontinuous functions $f:\IS\rightarrow \R_{\geq 0}\cup\{\infty\}$ are defined with respect to their elementary functions $f^\Gamma:\FS\rightarrow \R_{\geq 0}$, with $f(\alpha)= \sup_{x\sqsubseteq\alpha}f^\Gamma(x)$. Such $f$ is semicomputable if its elementary function $f^\Gamma$ is semicomputable. Let $\{f^\Gamma_n\}\ni f^\Gamma$ be a fixed enumeration such elementary functions and let $\K(f) = \min_{f^\Gamma_n=f^\Gamma}\K(n)$. For continuous semi-measure $Q$, we say semicontinuous $t$ is a $Q$-test if $Q(\{\alpha:t(\alpha)> 2^m\})< 2^{-m}$ for all $m\in\N$. The domain of such $t$ is extended to finite strings $x$, with $t(x)=\min_{x\sqsubset\alpha}t(\alpha)$. 

The function \mbox{$\sqsubseteq$-$\sup$} is the supremum under the partial order of $\sqsubseteq$ on $\FIS$. A function $\nu\,{:}\,\FS\,{\rightarrow}\,\FS$ is monotone iff for all $p,q\in\FS$,  $\nu(p)\,{\sqsubseteq}\,\nu(pq)$. Then monotone function $B\,{:}\,\FIS\,{\rightarrow}\,\FIS$ denotes the unique extension of $\nu$, where $B(p)\edf \sqsubseteq$-$\sup\, \{\nu(p_{\leq n})\,{:}\,n\,{\leq}\,\|p\|,n\,{\in}\,\N\}$ for all $p\,{\in}\,\FIS$. 
We say $\overline{\nu}$ is a recursive monotone function if there is a Turing machine $T$ with a write only output tape such that $\nu(x)$ is equal to the output of $T$ on input $x$. For $x\in\FS$, we say $\overline{\nu}^{-1}(x)=D\subseteq \FS$ is the prefix-free set of finite strings $y$ such that $\nu(y^{-})\sqsubset x \sqsubseteq \nu(y)$.

Let $\mathcal{T}_{\M'}$ be an enumeration of all $\mathcal{H}$-semicomputable $\M$-tests. The information that $\mathcal{H}$ has about  $\alpha\in\IS$ is defined to be the logarithm of a weighted sum of such tests, with $\I^\infty(\alpha;\mathcal{H})=\log \sum_{t_i\in\mathcal{T}_{\M'}} \m(i|\mathcal{H})t_i(\alpha)$. Note however that $2^{\I^\infty(\alpha;\mathcal{H})}$ is not necessarily an $\M$-test, since $\M$ is superadditive. This is a major difference from the finite case, where $2^{\I(x;\mathcal{H})}$ is an $\m$-test. The domain of $\I^\infty(\alpha;\mathcal{H})$ is extended to $\FS$ with $\I^\infty(x;\mathcal{H})=\inf_{x\sqsubseteq \alpha\in\IS}\I^\infty(\alpha;\mathcal{H})$ for all $x\in\FS$. Theorem~\ref{consasyminfo} represents the continuous variant of theorem~\ref{consinfo2}. The theorem shows the asymmetric information term $\I^ \infty$ has nongrowth properties with respect to recursive monotone transformations.

\begin{thm}
\label{consasyminfo}
For recursive monotone function $B:\FIS\rightarrow\FIS$ and all $\alpha\in\IS$, $\I^\infty({B}(\alpha)\,{;}\,\mathcal{H}) < \I^\infty(\alpha\,{:}\,\mathcal{H})+O_{B}(1)$.\end{thm}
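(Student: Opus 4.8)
The plan is to mimic the test-transport argument used for Theorem~\ref{consinfo2}, but carried out at the level of the continuous majorant $\M$ and its associated $\M$-tests, and using the preimage structure of a recursive monotone function. The key object to construct is a single $\mathcal{H}$-semicomputable $\M$-test $t$ on the source space such that $t(\alpha) \gtrsim 2^{\I^\infty(B(\alpha);\mathcal{H})}/O_B(1)$ for all $\alpha\in\IS$; once we have this, $t$ is one of the $t_i$ in the enumeration $\mathcal{T}_{\M'}$ with $\K(i|\mathcal{H})=O_B(1)$, so the weighted sum defining $\I^\infty(\alpha;\mathcal{H})$ dominates $\m(i|\mathcal{H})t(\alpha) \gtrsim 2^{-O_B(1)}2^{\I^\infty(B(\alpha);\mathcal{H})}$, which is exactly the claimed inequality after taking logarithms. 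The natural candidate is, for each source-space $\M$-test $s$ (in particular the universal one built from the enumeration $\mathcal{T}_{\M'}$ on the target), to pull it back: $t(\alpha) = \sup_{x\sqsubseteq\alpha}(Bs)^\Gamma(x)$ where $(Bs)$ is the push-forward of $s$ along $B$, or rather we go the other way — we pull the target universal test $s_{\mathrm{univ}}(\beta)=\sum_i \m(i|\mathcal{H})t_i(\beta)$ back through $B$ by setting $t(\alpha)=s_{\mathrm{univ}}(B(\alpha))$, and then check $t$ is (up to $O_B(1)$) an $\M$-test on $\IS$.

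The first concrete step is to verify that $t(\alpha)=s_{\mathrm{univ}}(B(\alpha))$ is $\mathcal{H}$-semicomputable as a function on $\IS$: its elementary function is $t^\Gamma(x) = \min_{y\in\overline{\nu}^{-1}(x)}$ over something, or more simply $t^\Gamma(x) = s_{\mathrm{univ}}^\Gamma(\nu(x))$ extended by the monotone-preimage bookkeeping, and since $B$ is recursive monotone and $s_{\mathrm{univ}}$ is $\mathcal{H}$-semicomputable, the composition is $\mathcal{H}$-semicomputable with $\K(t|\mathcal{H})=O_B(1)$. The second and central step is the test inequality: for every $m$, we must bound $\M(\{\alpha : t(\alpha)>2^m\})$. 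Now $\{\alpha : t(\alpha)>2^m\} = B^{-1}(\{\beta : s_{\mathrm{univ}}(\beta)>2^m\})$, and $\{\beta : s_{\mathrm{univ}}(\beta)>2^m\}$ is an open set of $\M$-measure $<2^{-m}$. The preimage of an open set under the monotone $B$ is open and can be written as a union $\bigcup_{x\in D}\Gamma_x$ over a prefix-free $D$ with $\nu(D)$ covering the target open set via the prefix-free preimage sets $\overline{\nu}^{-1}(x)$. So the required bound reduces to: $\M(B^{-1}(U)) \leq O_B(1)\,\M(U)$ for open $U$, which is precisely the statement that the push-forward $B\M$ is dominated by $\M$ up to a constant — and this holds because $B\M$ is a semicomputable continuous semi-measure (the push-forward of a recursive monotone function applied to a semicomputable continuous semi-measure is again one), hence is dominated by the majorant $\M$ with constant $2^{\K(B)+O(1)} = O_B(1)$.

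I expect the main obstacle to be the bookkeeping at the boundary between $\FS$ and $\IS$ — making the "open preimage" argument fully rigorous when $B$ may send an infinite string to a finite one (i.e., $\nu$ stabilizes), and correctly defining the elementary function $t^\Gamma$ so that $t(\alpha)=\sup_{x\sqsubseteq\alpha}t^\Gamma(x)$ genuinely equals $s_{\mathrm{univ}}(B(\alpha))$ including the case where $B(\alpha)\in\FS$, where one must use the convention $t(x)=\min_{x\sqsubset\alpha}t(\alpha)$ and the extension $\I^\infty(x;\mathcal{H})=\inf_{x\sqsubseteq\alpha}\I^\infty(\alpha;\mathcal{H})$ consistently. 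A secondary subtlety is that, unlike the finite case, $2^{\I^\infty(\alpha;\mathcal{H})}$ itself is not an $\M$-test (since $\M$ is superadditive), so one cannot simply quote Proposition~\ref{prp1}; instead one works directly with the weighted-sum definition of $\I^\infty$ and the universality of $s_{\mathrm{univ}}$ among $\mathcal{H}$-semicomputable $\M$-tests, which is why the pull-back test $t$ must be exhibited as an explicit member of $\mathcal{T}_{\M'}$ (up to the constant) rather than compared to a single canonical test.
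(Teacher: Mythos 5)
Your overall strategy --- pulling tests back through $B$ via the observation that $\mu(x)=\M(B^{-1}(x))$ is a semicomputable continuous semi-measure and hence dominated by the majorant $\M$ with a constant $O_B(1)$ --- is exactly the paper's, and that part is sound. The genuine gap is in your ``second and central step'': you pull back the single aggregated function $s_{\mathrm{univ}}(\beta)=\sum_i\m(i|\mathcal{H})t_i(\beta)=2^{\I^\infty(\beta;\mathcal{H})}$ and assert that $\{\beta: s_{\mathrm{univ}}(\beta)>2^m\}$ has $\M$-measure less than $2^{-m}$. That assertion is precisely the claim that $2^{\I^\infty(\cdot\,;\mathcal{H})}$ is itself an $\M$-test, which the paper explicitly warns is \emph{not} necessarily true because $\M$ is superadditive: tests here are defined by the level-set condition $\M(\{t>2^m\})<2^{-m}$, not by an integral condition, so a weighted sum of tests need not be a test. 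You flag this very subtlety in your closing paragraph, yet your central step still depends on it. Without the level-set bound on $s_{\mathrm{univ}}$ you cannot bound $\M(B^{-1}(\{s_{\mathrm{univ}}>2^m\}))$, so you cannot exhibit $t=s_{\mathrm{univ}}\circ B$ as a member of $\mathcal{T}_{\M'}$, and the final domination step collapses.

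The repair is to apply your pull-back to each test individually rather than to the aggregate. For each $t_i\in\mathcal{T}_{\M'}$, the function $Bt_i(\alpha)=\min_{\beta\sqsupseteq B(\alpha)}t_i(\beta)$ satisfies $\M(\{Bt_i>2^m\})\leq \mu(\{t_i>2^m\})\leq c\,\M(\{t_i>2^m\})<c\,2^{-m}$, so $Bt_i/c$ \emph{is} an $\mathcal{H}$-semicomputable $\M$-test with $\K(Bt_i|\mathcal{H})<\K(t_i|\mathcal{H})+O_B(1)$. Writing $t_i(B(\alpha))=Bt_i(\alpha)$, summing over $i$ with weights $\m(i|\mathcal{H})$, and re-indexing each $Bt_i/c$ inside the enumeration $\mathcal{T}_{\M'}$ yields $\I^\infty(B(\alpha);\mathcal{H})<\I^\infty(\alpha;\mathcal{H})+O_B(1)$ directly, with no appeal to a single universal test. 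This is exactly what the paper's proof does.
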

\begin{proof} 
Let $\mu(x) = \M({B}^{-1}(x))$.
$\mu$ is a semi-measure because $\mu(\perp)\leq 1$ and 
$\mu(x0)+\mu(x1)$
$=\M({B}^{-1}(x0))+\M({B}^{-1}(x1)) \leq \M({B}^{-1}(x))= \mu(x).$ $\mu$ is semicomputable because $\M$ is semicomputable and ${B}^{-1}(x)$ is enumerable. Since $\M$ is a majorant semicomputable semi-measure, there exists $c$ solely dependent on $B$ with $\mu(x) \leq c\M(x)$. So for any open set $U$, $\mu(U)< c\M(U)$. 

For all $t\in \mathcal{T}_{\M'}$ and $m\,{\in}\,\N$, let $S$ be a set of finite strings  representing the maximal binary intervals of $\{\alpha\,{:}\,t(\alpha)\,{>}\,2^m\}$. So $2^{-m}> \M(S) \geq \mu(S)/c$. So $c2^{-m}> \mu(S)=\M(B^{-1}(S))$
$\geq\M(\{\alpha\,{:}\,Bt(\alpha)> 2^m\})$, where $Bt(\alpha) = \min_{\beta\sqsupseteq B(\alpha)}t(\beta)$.  So $Bt(\alpha)/c\in \mathcal{T}_{\M'}$ is an $\M$ test. $Bt$ is $\mathcal{H}$-computable because $t$ and $B$ are $\mathcal{H}$-semicomputable, with $\K(Bt|\mathcal{H}) < \K(t|\mathcal{H})+O_B(1)$. So this implies the inequality $\I^\infty(B(\alpha);\mathcal{H})=\log\sum_{t\in \mathcal{T}_{\M'}}\m(t|\mathcal{H}) t(B(\alpha))$ $= \log\sum_{t\in \mathcal{T}_{\M'}}\m(t|\mathcal{H})Bt(\alpha)<$ $\log\sup_{Bt\in \mathcal{T}_{\M'}}\m(Bt|\mathcal{H})Bt(\alpha)+O_B(1)<$ $\log\sum_{t\in \mathcal{T}_{\M'}}\m(t|\mathcal{H}) t(\alpha)+O_B(1)= \I^\infty(\alpha;\mathcal{H})+O_B(1)$. 
\end{proof}

For continuous measures $P$ and finite strings $x$, we define the following finite deficiency function $\D_P(x) = \log \max_{y\sqsubseteq x} \left(\M(y)/P(y)\right)$. Its extension to infinite strings $\alpha\in\IS$ is denoted by $\D_P(\alpha) = \sup_n \D_P(\alpha_{\leq n})$. $\D_P$ is a $P$-test, with for all $m\in\N$, $P(\{\alpha\,{:}\,\D_P(\alpha)>2^m\})<2^{-m}$. $\D$ is also a probability bounded ML test~\cite{Gacs13}. For continuous semi-measures $P$, $\D_P(\alpha)=\sup_n \D_P(\alpha_{\leq n})$ is the largest, non-increasing on $P$, semicontinuous on $\alpha$, extension of $\D$. This term $\D$ is admittedly not a definitive definition of randomness with respect to a continuous (semi)measure. However proving properties about $\D$ could have utility in future applications. Theorem~\ref{contrandcons} follows using the same general proof technique as theorem~\ref{disccons}, with the construction of a $\M$-test, $t$. Theorem~\ref{conttight} shows tightness of theorem~\ref{contrandcons} (on a finite/infinite level).

\begin{thm}
\label{contrandcons}
For recursive monotone function $B:\FIS\rightarrow\FIS$ and continuous semicomputable semi-measure $P$, for all $\alpha\in\IS$, $\D_{BP}(B(\alpha))<\D_P(\alpha)+\I^\infty(\alpha\,{;}\,\mathcal{H})+O_{B,P}(1)$.
\end{thm}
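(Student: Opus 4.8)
The plan is to mimic the proof of Theorem~\ref{disccons}, replacing the discrete majorant $\m$ by the continuous majorant $\M$ and replacing the $\m$-test by a continuous $\M$-test built from the two ``deficiency ratios'' that define $\D_P$. Recall $\D_P(x) = \log\max_{y\sqsubseteq x}(\M(y)/P(y))$, so the natural candidate is the elementary function $t^\Gamma(x)$ whose value encodes the ratio of the $BP$-deficiency of $B(x)$ to the $P$-deficiency of $x$; concretely, for a finite string $x$ in the domain of $\overline{\nu}$, set $t^\Gamma(x) \eqa \M(B(x))P(x)/(\M(x)\,BP(B(x)))$, take the running maximum along prefixes to make it monotone in $\sqsubseteq$, and let $t$ be the induced semicontinuous function on $\IS$. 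Since $B$ is recursive monotone and $\M$, $P$, $BP$ are all semicomputable, $t^\Gamma$ is computable in the limit, hence $\mathcal{H}$-semicomputable with $\K(t|\mathcal{H}) = O_{B,P}(1)$.

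The first substantive step is to check that (a multiplicative $O_{B,P}(1)$ rescaling of) $t$ is an $\M$-test, i.e. $\M(\{\alpha : t(\alpha) > 2^m\}) < 2^{-m}$ for all $m$. Here I would follow the superadditivity bookkeeping used in the proof of Theorem~\ref{consasyminfo}: let $S$ be the set of maximal binary intervals of the open set $\{\alpha : t(\alpha) > 2^m\}$, and estimate $\M(S)$ by pushing forward through $B^{-1}$ and using $BP(y) \geq P(B^{-1}(y))/O(1)$ together with the fact that $\M(B^{-1}(x)) \leq c\,\M(x)$ for a constant $c$ depending only on $B$ (this is the $\mu(x) = \M(B^{-1}(x)) \leq c\M(x)$ estimate already established in the proof of Theorem~\ref{consasyminfo}). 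The cancellation $\M(x)P(x)^{-1}$ against $\M(y)^{-1}BP(y)$ summed over fibers of $B$ collapses the sum, just as in Theorem~\ref{disccons}, yielding the test inequality up to the $O_{B,P}(1)$ factor.

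Once $t/O_{B,P}(1)$ is a bona fide $\mathcal{H}$-semicomputable $\M$-test, it is (up to normalization) one of the $t_i \in \mathcal{T}_{\M'}$, so $2^{\I^\infty(\alpha;\mathcal{H})} \geq \m(i|\mathcal{H})\,t_i(\alpha) \gea t(\alpha)/O_{B,P}(1)$, giving $\log t(\alpha) < \I^\infty(\alpha;\mathcal{H}) + O_{B,P}(1)$. On the other hand, by construction $\log t(\alpha) \geq \D_{BP}(B(\alpha)) - \D_P(\alpha) - O_{B,P}(1)$: the running maximum defining $t^\Gamma$ dominates the ratio at the prefix of $\alpha$ witnessing $\D_{BP}(B(\alpha))$ — here I must use monotonicity of $B$ so that a prefix of $B(\alpha)$ achieving the max is the image of some prefix of $\alpha$ — while the denominator $\M(x)/P(x)$ is bounded above by $2^{\D_P(\alpha)}$ for every prefix $x \sqsubseteq \alpha$. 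Chaining the two bounds yields $\D_{BP}(B(\alpha)) < \D_P(\alpha) + \I^\infty(\alpha;\mathcal{H}) + O_{B,P}(1)$.

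The main obstacle I anticipate is the second step, verifying the $\M$-test property in the presence of $\M$'s superadditivity: unlike the discrete case where $\sum_x \m(x)t(x) \leq 1$ follows by a one-line Fubini-type exchange, here the level sets must be handled through maximal intervals and the preimage estimate $\M(B^{-1}(S)) \leq c\M(S)$, and one has to be careful that taking the running maximum in the definition of $t^\Gamma$ does not destroy the summability — i.e. that $t$ remains semicontinuous and that its level sets still pull back correctly under the monotone map $B$. The bookkeeping matching the exact shape of the definition $\D_P(x) = \log\max_{y\sqsubseteq x}(\M(y)/P(y))$ (two nested maxima, one inside $\D_{BP}(B(\alpha))$ and one inside $\D_P(\alpha)$) against a single running maximum in $t$ is where the constants and the monotonicity hypothesis really get used.
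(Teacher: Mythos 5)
Your proposal matches the paper's proof essentially step for step: the paper uses exactly the semicontinuous test $t(\alpha)=\sup_n P(\alpha_{\leq n})\M(B(\alpha_{\leq n}))/(\M(\alpha_{\leq n})BP(B(\alpha_{\leq n})))$, verifies the $\M$-test property by summing over maximal intervals of the level set and collapsing the fiber sums $\sum_{x:B(x)=y}P(x)/BP(y)\leq 1$ to get $\M(S)\leq 2^{-m}\M(B(S))\leq 2^{-m}$, and then concludes via $\mathcal{H}$-semicomputability of $t$ and membership in $\mathcal{T}_{\M'}$. The one place you are more explicit than the paper --- checking that the sup over prefixes of $\alpha$ actually dominates the ratio at the prefix of $B(\alpha)$ witnessing $\D_{BP}(B(\alpha))$ --- is a point the paper silently asserts, so your added care there is a feature, not a divergence.
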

\begin{proof}
We the semicontinuous function $t(\alpha)=\sup_n P(\alpha_{\leq n})\M(B(\alpha_{\leq n})/(\M(\alpha_{\leq n})BP(B(\alpha_{\leq n})))$. $t(\alpha)$ is a $\M$ test. Indeed, let $m\in\N$ and $S=\{\alpha\,{:}\,t(\alpha){>} 2^m\}$. So $\M(S)\leq\sum_{x\in S}2^{-m}\M(x)t(x)$ $=2^{-m}\sum_{x\in S}P(x)\M(B(x))/BP(B(x))\leq 2^{-m}\sum_{y\in B(S)}\M(y)\sum_{x,B(x)=y}P(x)/BP(y)$ $=2^{-m}\M(B(S))$ $\leq 2^{-m}$. Since $t$ is computable in the limit, $t$ is $\mathcal{H}$-semicomputable, with $\K(t|\mathcal{H})=O_{B,P}(1)$. Since $t\in\mathcal{T}_{\M'}$ is an $\M$ test, $\D_{BP}(B(\alpha))-\D_P(\alpha)<\log t(\alpha) < \I^\infty(\alpha;\mathcal{H})+\K(t|\mathcal{H}) < \I^\infty(\alpha;\mathcal{H})+O_{B,P}(1)$. 
\end{proof}

\begin{thm}[\cite{BienvenuPoShHo13}] 
\label{conttight}
There exists limit computable $B\,{:}\,\IS\,{\rightarrow}\,\FIS$, continuous measure $P$, and infinite sequence $\alpha\,{\in}\,\IS$, where $\D_P(\alpha){<}\infty$, 
$\D_{B(P)}(B\alpha){=}\infty$, and $\I(\alpha;\mathcal{H}){=}\infty$. 
\end{thm}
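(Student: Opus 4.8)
The plan is to adapt the finite-string construction of Theorem~\ref{noncons} to the continuous setting, replacing the uniform measure on $\BT^n$ by the uniform measure on Cantor space and replacing the finite cutoff string $\Omega_{\leq c}$ by the full halting probability $\Omega\in\IS$ viewed as an infinite oracle that the monotone map ``reads off'' bit by bit. Concretely, I would let $P$ be the uniform (Lebesgue) measure on $\IS$, and define $\alpha = 0\,\Omega$, i.e.\ the infinite sequence whose first bit is $0$ and whose tail is $\Omega$. The function $B$ should act on an input $\beta\in\IS$ by reading $\beta$ against $\Omega$: for as long as $\beta$'s tail agrees with a lower/upper approximation of $\Omega$, $B$ copies $\beta$ to the output; once it detects (using the fact that $\Omega$ is left-c.e., so one can effectively approximate it from below in the limit) that $\beta$'s tail has ``fallen below'' $\Omega$, it keeps copying forever, but if $\beta$'s tail equals $\Omega$ exactly (the measure-zero but crucial case, hit by $\alpha$) then $B$ erases the tail and outputs only the leading $0$-block, i.e.\ $B(\alpha)=0^\infty$ (or $0^k\Omega$-free segment). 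This is a recursive monotone function computable in the limit, with $\K(B,P)=O(1)$.

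Next I would verify the three quantitative claims. First, $\D_P(\alpha)<\infty$: since $\alpha=0\Omega$ and $\Omega$ is Martin-L\"of random for Lebesgue measure, $\sup_n \M(\alpha_{\leq n})/P(\alpha_{\leq n})$ is finite — the prefixes of $0\Omega$ have complexity within $O(1)$ of their length, so $\M(\alpha_{\leq n})/2^{-n}$ stays bounded; this is exactly where randomness of $\Omega$ is used. Second, $\D_{BP}(B(\alpha))=\infty$: the image $B(\alpha)=0^\infty$ (extended appropriately), but $BP$ assigns to the cylinder $\Gamma_{0^k}$ a mass that is, up to constants, $P(\{\beta: \beta_{>1}\preceq \Omega\})$ restricted to the $0^k$-prefixed strings, which collapses onto a single point and forces $BP(0^k)$ to be essentially $\Omega$-sized rather than $2^{-k}$-sized; more carefully, $BP(0^k)\asymp 2^{-k}\cdot(\text{mass of tails }\preceq\Omega)$, and because $\Omega$ is not computable, $\M(0^k)/BP(0^k)\to\infty$. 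The cleanest route is to show $-\log BP(0^k) \ge k + g(k)$ for an unbounded $g$ while $\K(0^k)=O(\log k)$, using that an exact value of $\Omega$ to precision $k$ is encoded in how much $BP$-mass survives. Third, $\I^\infty(\alpha;\mathcal{H})=\infty$: since $\alpha$ literally contains $\Omega$ as a tail and $\Omega\equiv_T\mathcal{H}$, the halting sequence has unbounded information about $\alpha$; one exhibits, for each $m$, an $\mathcal{H}$-semicomputable $\M$-test $t_m$ with $t_m(\alpha)>2^m$ (e.g.\ a test concentrated on the cylinder given by the first $m$ bits of $\Omega$, which is $\mathcal{H}$-computable and has Lebesgue measure $2^{-m}$), whence the weighted sum defining $\I^\infty$ diverges.

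The main obstacle I anticipate is the second point: making precise, in the continuous/semicontinuous framework with the particular deficiency $\D$ defined via $\D_P(\alpha)=\sup_n\log\M(\alpha_{\leq n})/P(\alpha_{\leq n})$, that the image measure $BP$ genuinely kills the randomness of $B(\alpha)$. One has to control $BP$ on short cylinders $0^k$ carefully — $B$ maps a ``thick'' event (all tails strictly below $\Omega$, which has positive measure) to long strings but maps the single exceptional sequence with tail exactly $\Omega$ to the short string, so the delicate bookkeeping is showing the $BP$-mass landing on $\Gamma_{0^k}$ decays strictly faster than $2^{-k}$ by an unbounded margin, i.e.\ that reconstructing $\Omega$'s bits is forced. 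Since the statement is credited to \cite{BienvenuPoShHo13}, I would cite their construction for this measure-theoretic core and restrict the novel work here to checking it transfers verbatim to the $\D$/$\I^\infty$ formalism of this section — in particular that their ``non-conservation'' witness sequence is random in the $\D_P$ sense and exotic in the $\I^\infty$ sense, both of which follow from $\alpha$ containing $\Omega$.
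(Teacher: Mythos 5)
First, a point of comparison: the paper does not prove Theorem~\ref{conttight} at all --- it is stated with the citation to \cite{BienvenuPoShHo13} and no proof, and the paper even describes Theorem~\ref{noncons} as the \emph{finite} adaptation of the arguments behind this theorem. So your closing move of deferring the measure-theoretic core to the cited reference is, in effect, what the paper itself does, and the overall instinct (a continuous version of the $\Omega$-comparison trick from Theorem~\ref{noncons}) is the right family of ideas.

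That said, the construction you sketch cannot work as written, and the failure is exactly at the point you flag as the ``main obstacle.'' (i) A recursive monotone $B$ can only extend its output, so it cannot ``erase the tail'' after copying it; you must commit to outputting little and only later extending, not the reverse. (ii) More fundamentally, if $B$ is recursive monotone and $w=B(\alpha)$ is a \emph{finite} string (your ``leading $0$-block''), then $\D_{BP}(w)=\log\max_{y\sqsubseteq w}\M(y)/BP(y)$ is a maximum over finitely many prefixes $y$, and each such $y$ satisfies $BP(y)\geq P(\Gamma_z)=2^{-\|z\|}>0$ for the prefix $z$ of $\alpha$ with $\nu(z^-)\sqsubset y\sqsubseteq\nu(z)$; hence $\D_{BP}(B(\alpha))<\infty$ automatically. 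With $\alpha=0\Omega$ and $B(\alpha)$ equal to a single bit or short block, you get $\D_{BP}(B(\alpha))=O(1)$, not $\infty$. To force an infinite deficiency the collapse must recur at infinitely many scales: one needs $B(\alpha)$ infinite with $\sup_k \M((B\alpha)_{\leq k})/BP((B\alpha)_{\leq k})=\infty$ (e.g.\ a block construction in which $\alpha$ interleaves padding with longer and longer segments of $\Omega$, each stage replaying the $\d_{Bp}(0^b)=b+c$ computation of Theorem~\ref{noncons} with $b_i,c_i\to\infty$), or else one must exploit genuinely non-monotone limit computability, under which the image measure can vanish on cylinders containing the final output. (iii) Your mass estimate also points the wrong way: since your $B$ copies verbatim every $\beta$ whose tail falls below $\Omega$, those sequences contribute mass $\asymp 2^{-k}$ to $\Gamma_{(B\alpha)_{\leq k}}$ whenever $B\alpha$ agrees with them on $k$ bits, so $BP$ does not decay faster than $P$ there. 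The single-point, single-scale version of the argument establishes Theorem~\ref{noncons}; it does not, by itself, give the infinitary statement.
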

\section{Acknowledgements}
I would like to thank Laurent Bienvenu, P\'{e}ter G\'{a}cs, Wolfgang Merkle, Joseph Miller, Chris Porter, Paul Shafer,  and Alexander Shen for insightful discussions and reference material.

\end{document}